\documentclass[conference]{IEEEtran}

\usepackage[T1]{fontenc}
\usepackage{cite}
\usepackage{url}
\usepackage{hyperref}
\usepackage{graphicx}
\usepackage{amsmath,amssymb,amsthm}
 \usepackage{booktabs}
 \usepackage{algorithm}
\usepackage{algorithmic}
\usepackage{pgfplots}
 \usepackage{tikz}
 \pgfplotsset{compat=1.18}

\newtheorem{definition}{Definition}

\newtheorem{theorem}{Theorem}

\newtheorem{remark}{Remark}

\newcommand{\E}{\mathbb{E}}
\newcommand{\Prob}{\mathbb{P}}

\usetikzlibrary{arrows.meta, positioning, calc, backgrounds, shadows, fit}

\title{High-Fidelity Network Management for Federated AI-as-a-Service: Cross-Domain Orchestration}

\author{
    Mohaned Chraiti\IEEEauthorrefmark{1}, Ozgur Ercetin\IEEEauthorrefmark{1}, and   Merve Saimler\IEEEauthorrefmark{2}\\
      \IEEEauthorrefmark{1}Electronics Engineering Department, Sabancı University\\
       \IEEEauthorrefmark{2} Ericsson Research, Türkiye\\
     Emails: mohaned.chraiti@sabanciuniv.edu, oercetin@sabanciuniv.edu, and merve.saimler@ericsson.com.
}

\begin{document}

\maketitle

\begin{abstract}


To support the emergence of AI-as-a-Service (AIaaS), communication service providers (CSPs) are on the verge of a radical transformation—from pure connectivity providers to AIaaS a managed network service (control-and-orchestration plane that exposes AI models). In this model, the CSP is responsible not only for transport/communications, but also for intent-to-model resolution and joint network–compute orchestration, i.e.,  reliable and timely end-to-end delivery. The resulting end-to-end AIaaS service thus becomes governed by communications impairments (delay, loss) and inference impairments (latency, error). A central open problem is an operational AIaaS control-and-orchestration framework that enforces high fidelity, particularly under multi-domain federation. This paper introduces an assurance-oriented AIaaS management plane based on Tail-Risk Envelopes (TREs): signed, composable per-domain descriptors that combine deterministic guardrails with stochastic rate–latency–impairment models. Using stochastic network calculus, we derive bounds on end-to-end delay violation probabilities across tandem domains and obtain an optimization-ready risk-budget decomposition. We show that tenant-level reservations prevent bursty traffic from inflating tail latency under TRE contracts. An auditing layer then uses runtime telemetry to estimate extreme-percentile performance, quantify uncertainty, and attribute tail-risk to each domain for accountability. Packet-level Monte-Carlo simulations demonstrate improved p99.9 compliance under overload via admission control and robust tenant isolation under correlated burstiness. 

\end{abstract}

\begin{IEEEkeywords}
AI-as-a-Service, tail latency, stochastic network calculus, federation, orchestration, multi-tenant isolation.
\end{IEEEkeywords}

\section{Introduction}
\subsection{Motivations}
Artificial intelligence is advancing at an unprecedented rate and, over the past three years, has transitioned from a specialized tool employed primarily by experts to a widely accessible, everyday technology—a shift driven predominantly by large language model (LLM) platforms. This acceleration is catalyzing the emergence of AI-as-a-Service (AIaaS) and, with it, an ongoing central leap in the sixth generation networks (6G) services: the communication service provider (CSP) is on the edge of evolving from a connectivity-only operator into a AI control-and-orchestration provider that exposes AI capabilities as a managed network service \cite{Merve1,SaimlerSlides2024,merve2}. This evolution is consistent with the AI-native network paradigm, in which AI capabilities are intrinsically embedded in network functions and exposed as consumable services rather than treated as external overlays. In this extended role, the operator does not merely transport traffic between AI model providers and end users; it performs intent-based service resolution, mapping user intents to an appropriate model and execution option, and orchestrates joint network and compute resources to meet contracted service targets. As a result, the end user is no longer bound to a specific, pre-known model endpoint; instead, the user specifies the task requirements, and the network selects the model and provisions the necessary communications and computational resources. The delivered service quality is inherently end-to-end, coupling transport impairments (delay, loss) with inference impairments (latency, model-error/quality degradation). Furthermore, the AIaaS orchestration leverages network-native intelligence, data, and Quality-of-Service (QoS) control that are unavailable to over-the-top AI providers. Consequently, enforceability must be addressed at the Control-and-Orchestration layer as a single service obligation.

\subsection{Related Work}

Although AI model providers have rapidly matured the AIaaS supply stack, the {network} exposure, control, and intent-driven orchestration of AIaaS remains at an infant stage. Nonetheless, research on contiguous topics such AI-native networking and zero-touch management architectures provide useful principles and interfaces, but they do not yet deliver an operational framework for enforceable end-to-end guarantees and multi-domain coordination.

AI-native network visions emphasize ubiquitous intelligence, distributed data infrastructure, and zero-touch operations, and advocate exposing AI capabilities and network-native services through Application-Programming-Interface (APIs)
\cite{SaimlerSlides2024,etsiZSM002,3gpp23501,3gpp23222capif,CAMARAProject}. From an AI-native perspective, the APIs are enablers, with AIaaS fundamentally relying on Machine Learning Operations (MLOps) as-a-Service to operationalize lifecycle management, monitoring, and assurance within the network \cite{SaimlerSlides2024,3gpp23222capif,CAMARAProject}. However, this API-centric view leaves a critical operational gap: \emph{how a CSP integrates exposed AIaaS features into a management and orchestration framework that is enforceable in the field} \cite{SaimlerSlides2024,etsi7772}.  APIs specify how services are requested, but they do not specify how the network can make \emph{verifiable commitments} about end-to-end performance when both transport and inference contribute to experienced latency and reliability.


Two requirements dominate AIaaS at scale. First, AIaaS contracts are naturally \emph{tail contracts}:
for interactive inference and agentic pipelines, Service Level Objectives (SLOs) are expressed as extreme-percentile deadlines
(e.g., p99/p99.9 end-to-end latency), not average latency \cite{DeanBarroso2013Tail}.
Average-QoS engineering is structurally insufficient because rare events arise from burst synchronization,
shared bottlenecks, scheduler transients, radio fades, and compute contention; moreover, these tail events may occur
in \emph{any} segment of the pipeline (access/transport, execution, or inter-domain traversal) while the client observes a single end-to-end SLO \cite{DeanBarroso2013Tail}.
Second, AIaaS requires \emph{federation}. Coverage, locality, and developer reach push AIaaS beyond a single operator
domain, creating multi-principal objectives, confidentiality constraints (no disclosure of internal topology, queue states,
or scheduler rules), and an accountability requirement where disputes and penalties must be supported by auditable
evidence \cite{etsi7772,LuoComMag2023SliceAudit}. Without an explicit representation of tail risk that composes across domains, an AIaaS orchestrator has only two failure
modes: over-provisioning (economically infeasible) or tail violations (trust-destroying) \cite{DeanBarroso2013Tail}.

\subsection{Problem Statement}
This paper targets the missing architectural primitive: a \emph{composable tail-risk contract} that (i) abstracts each
domain as a signed service descriptor, (ii) composes across tandem domains to yield end-to-end p99/p99.9 guarantees,
(iii) enables risk-budget decomposition and optimization under confidentiality constraints, and (iv) supports audit and
settlement without requiring internal disclosure. The core contribution is therefore not a new API surface; it is an
\emph{assurance layer} for AIaaS management and orchestration. This assurance layer complements ongoing AIaaS API standardization efforts by addressing enforceability and accountability rather than service invocation semantics.

\subsection{Contributions}
The paper makes four technical contributions that together operationalize federated AIaaS as an enforceable managed service.

\begin{itemize}
\item We introduce Tail-Risk Envelopes (TREs): signed, per-domain contracts for tail-latency assurance that are composable across domains with minimal disclosure.
\item We derive tractable bounds on per-domain and end-to-end delay-violation probabilities in tandem domains, enabling p99/p99.9 feasibility checks for admission and reservation.
\item We formulate federated provisioning as a risk-budgeted optimization that selects paths, decomposes end-to-end tail budgets across domains, and reserves per-tenant capacity, with a proof of burst-resilient isolation.
\item We develop a telemetry-driven auditing layer that estimates extreme percentiles with confidence intervals, updates TRE uncertainty when tails shift, and attributes tail-risk across domains for accountability.
\end{itemize}

The developed approach is designed to integrate with standardized AIaaS exposure and orchestration frameworks that enable deployment within existing and emerging 3GPP- and industry-aligned architectures.

\section{Contract Interface for Federated AIaaS}\label{sec:sysmodel}

\subsection{Federated AIaaS execution pipeline and service contract}\label{subsec:pipeline}
An AIaaS request is realized as a {multi-stage execution pipeline} that spans both
communication and computation: radio access, transport/backhaul, edge or cloud execution, and (when needed)
inter-operator transit. The key operational characteristic is {federation}: at one or more stages, the execution
domain can be selected from multiple administrative entities (e.g., alternative edge providers or partner operators),
subject to locality, trust, and compliance constraints, consistent with global API aggregation visions
\cite{EricssonGlobalNetworkAPI2023,GSMAOpenGateway,CAMARAProject,3gpp23222capif}.
This federated execution model directly supports multi-CSP AIaaS exposure and aggregation scenarios.

\paragraph{Pipeline stages and architecture}
We represent the AIaaS path as $L$ ordered {stages} indexed by $\ell\in\{1,\dots,L\}$.
Each stage $\ell$ offers a {set of feasible domains} $\mathcal{D}_\ell(\Omega_{u,k})$ determined by policy and tenant constraints
(e.g., allowed operators, locality region, trust level), where $\Omega_{u,k}$ is defined below.
A concrete execution path is therefore a {sequence of selected domains}
\begin{equation}
\pi \triangleq (d_1,\dots,d_L), \qquad d_\ell \in \mathcal{D}_\ell(\Omega_{u,k}).
\end{equation}
A domain $d$ may represent a Radio-Access-Network (RAN) segment, a transport segment, an edge cluster, a core cloud, or an inter-operator transit segment.
This distinction (stages vs.\ selectable domains per stage) makes federation explicit and separates {where choice exists}
from {where the pipeline is structurally fixed}.

Fig.~\ref{fig:sys_overview} summarizes the operational control loop.
A CSP hosts an AIaaS {control-and-orchestration (C\&O)} function that (i) receives intent-level requests
through exposure APIs, (ii) resolves intent-to-model and execution placement, (iii) provisions joint {network and compute} resources along a
selected federated path, and (iv) audits delivered performance for accountability and settlement. In practice, intent intake and service invocation are realized through standardized network exposure mechanisms, while the assurance logic operates below the API layer.
This aligns with AI-native and zero-touch management directions \cite{SaimlerSlides2024,etsiZSM002,etsi7772}, and is compatible with
existing exposure/API frameworks (e.g., NEF/CAPIF/CAMARA) \cite{3gpp23501,3gpp23222capif,CAMARAProject}.
The paper’s focus is the missing assurance layer: {how to make end-to-end tail guarantees enforceable and composable across domains without
requiring disclosure of internal scheduler or queue state}.

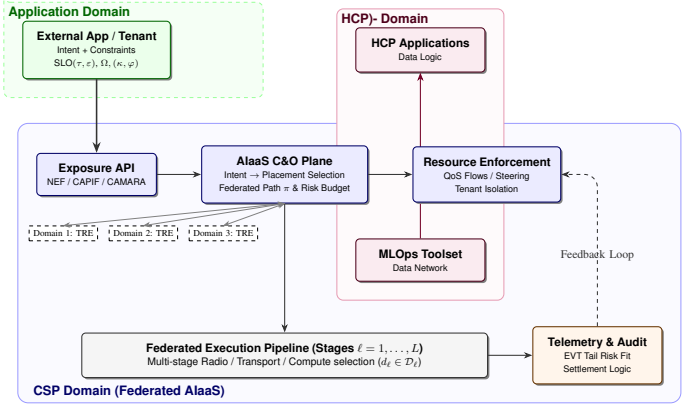
\begin{figure}
  \hspace{-0.3cm}
 \resizebox{0.5\textwidth}{!}{
\begin{tikzpicture}[
    font=\sffamily\small,
    >=Stealth,
    node distance=1.2cm and 1.2cm,
    base/.style={draw, thick, rounded corners=3pt, align=center, inner sep=8pt, fill=white, drop shadow={opacity=0.15, shadow xshift=0.5mm, shadow yshift=-0.5mm}},
    app_box/.style={base, fill=green!8, draw=green!40!black},
    csp_box/.style={base, fill=blue!8, draw=blue!40!black},
    exec_box/.style={base, fill=gray!8, draw=gray!40!black},
    audit_box/.style={base, fill=orange!8, draw=orange!40!black},
    hcp_box/.style={base, fill=purple!8, draw=purple!40!black},
    flow/.style={->, thick, color=black!80},
    step/.style={draw, circle, fill=red!70!black, text=white, font=\scriptsize\bfseries, inner sep=1.5pt, minimum size=13pt}
]

\node[app_box, minimum width=4cm] (app) {
    \textbf{External App / Tenant}\\
    \scriptsize Intent + Constraints\\
    \scriptsize SLO$(\tau,\varepsilon)$, $\Omega, (\kappa,\varphi)$
};

\node[csp_box, minimum width=3cm, below=2cm of app] (api) {
    \textbf{Exposure API}\\
    \scriptsize NEF / CAPIF / CAMARA
};

\node[csp_box, minimum width=4.5cm, right=of api] (co) {
    \textbf{AIaaS C\&O Plane}\\
    \scriptsize Intent $\rightarrow$ Placement Selection\\
    \scriptsize Federated Path $\pi$ \& Risk Budget
};

\node[draw, dashed, fill=white, font=\scriptsize, minimum width=1.8cm, below=0.6cm of co.south, xshift=-6cm] (tre1) {Domain 1: TRE};
\node[draw, dashed, fill=white, font=\scriptsize, minimum width=1.8cm, right=0.3cm of tre1] (tre2) {Domain 2: TRE};
\node[draw, dashed, fill=white, font=\scriptsize, minimum width=1.8cm, right=0.3cm of tre2] (tre3) {Domain 3: TRE};

\node[exec_box, minimum width=11cm, below=3.5cm of co] (pipe) {
    \textbf{Federated Execution Pipeline (Stages $\ell=1,\dots,L$)}\\
    \footnotesize Multi-stage Radio / Transport / Compute selection ($d_\ell \in \mathcal{D}_\ell$)
};

\node[audit_box, minimum width=3.5cm, right=of pipe] (audit) {
    \textbf{Telemetry \& Audit}\\
    \scriptsize EVT Tail Risk Fit\\
    \scriptsize Settlement Logic
};

\node[hcp_box, minimum width=3.5cm, right=5cm of app] (hcp_app) {
    \textbf{HCP Applications}\\
    \scriptsize Data Logic
};

\node[hcp_box, minimum width=3.5cm, below=4.5cm of hcp_app] (hcp_ml) {
    \textbf{MLOps Toolset}\\
    \scriptsize Data Network
};

\begin{scope}[on background layer]
    \filldraw[draw=green!40, fill=green!4, dashed, rounded corners=5pt] 
        ($(app.north west)+(-0.5,0.5)$)-| ($(app.south east)+(2.5,-0.5)$)-| cycle;
    \node[anchor=north west, font=\sffamily\bfseries\color{green!40!black}] at ($(app.north west)+(-0.5,0.5)$) {Application Domain};

    \filldraw[draw=blue!40, fill=blue!2, rounded corners=8pt] 
        ($(api.north west)+(-0.5,0.8)$) rectangle ($(audit.south east)+(0.5,-0.5)$);
    \node[anchor=north west, font=\sffamily\bfseries\color{blue!40!black}] at ($(api.north west)+(-0.2,-6.2)$) {CSP Domain (Federated AIaaS)};

    \filldraw[draw=purple!40, fill=purple!2, rounded corners=5pt] 
        ($(hcp_app.north west)+(-0.5,0.5)$) rectangle ($(hcp_ml.south east)+(0.5,-0.5)$);
    \node[anchor=north west, font=\sffamily\bfseries\color{purple!40!black}] at ($(hcp_app.north west)+(-0.5,0.5)$) {HCP)- Domain};
\end{scope}

\draw[flow, line width=1.2pt] (app.south) -- (api.north);

\draw[flow] (api.east) -- (co.west);

\draw[flow] (co.south) -- (co |- pipe.north);


\draw[flow] (pipe.east) -- (audit.west);


\draw[flow, color=purple!60!black] (hcp_ml.north) -- (hcp_app.south);

\foreach \i in {1,2,3} {
    \draw[<->, gray, thin] (tre\i.north) -- (co.270);
}

\node[csp_box, minimum width=3.5cm, right=of co] (enf) {
    \textbf{Resource Enforcement}\\
    \scriptsize QoS Flows / Steering\\
    \scriptsize Tenant Isolation
};

\draw[flow, dashed, rounded corners=10pt] (audit.north) |- node[above, font=\small, pos=0.2] {Feedback Loop} (enf.east);
\draw[flow] (co.east) -- (enf.west);
\end{tikzpicture}}
\vspace{-.4cm}
\caption{Federated AIaaS execution pipeline and assurance loop. The CSP AIaaS receives intents, selects model/placement, and provisions joint network--compute resources across a multi-stage path. Each administrative domain, including the Hyperscale Cloud Provider (HCP), exposes a TRE contract (signed, composable) rather than raw internal state. Telemetry feeds an audit layer for p99/p99.9 calibration.}
\label{fig:sys_overview}
\vspace{-0.5cm}
\end{figure}


\paragraph{Tenant-level AI service-level objective}
We formalize the tenant contract as an SLO centered on {tail latency} and correctness constraints.
For tenant $u$ and inference class $k$, define
\begin{equation}
\mathrm{SLO}_{u,k}=\big(\tau_{u,k},\varepsilon_{u,k},\Omega_{u,k},\kappa_{u,k},\varphi_{u,k}\big),
\label{eq:ai_slo}
\end{equation}
where: (i) $\tau_{u,k}$ is the end-to-end deadline; (ii) $\varepsilon_{u,k}$ is the maximum allowed violation probability
(e.g., $\varepsilon_{u,k}=10^{-3}$ corresponds to p99.9);
(iii) $\Omega_{u,k}$ encodes locality/trust/compliance constraints and the admissible operator/domain sets;
(iv) $\kappa_{u,k}$ encodes accuracy/quality constraints (e.g., minimum task success probability or maximum error);
and (v) $\varphi_{u,k}$ encodes freshness constraints (e.g., maximum staleness) managed via model life-cycle and data APIs
\cite{SaimlerSlides2024}.
The motivation for tail SLOs is practical: user-perceived reliability is dominated by rare slow events in large-scale services,
so average-QoS engineering is insufficient \cite{DeanBarroso2013Tail}.

Let $A_u(s,t)$ denote the cumulative AI workload arriving for tenant $u$ over $[s,t)$, measured in a normalized unit
(e.g., inference jobs or equivalent compute quanta). Along a selected path $\pi$, each stage $\ell$ provides a service process
$S_{d_\ell}(s,t)$ in the same unit. The resulting (virtual) sojourn-time delay $W_{u,k}$ is induced by the interaction of arrivals and the
(min-plus) concatenation of per-stage services. The management objective is to enforce the tail-latency constraint
\begin{equation}
\Prob\{W_{u,k}>\tau_{u,k}\}\le \varepsilon_{u,k},
\label{eq:chance_constraint1}
\end{equation}
{under} multi-tenant contention, domain uncertainty, and federation.
The expression in \eqref{eq:chance_constraint1} is the central operational requirement: it is the mathematical form of ``p99/p99.9 end-to-end guarantee''. 

\begin{remark} Violations of \eqref{eq:chance_constraint1} can originate anywhere in the chain (radio, transport, compute, inter-domain), yet the user observes a single end-to-end SLO. Moreover, in a federated setting domains will not reveal internal queue lengths,
scheduler rules, or topology. This motivates a {contract interface} that is both composable and confidentiality-preserving.
\end{remark}

\subsection{TREs as a composable per-domain contract}\label{subsec:tre}
We introduce the TRE as the operational interface between a domain and the AIaaS orchestrator.
A TRE is the domain’s published service contract at a given reservation level (e.g., slice class, priority class, CPU share):
it is (i) \emph{signed} for accountability, (ii) \emph{composable} across tandem stages for end-to-end reasoning, and (iii) \emph{minimal}
so it can be disclosed without exposing internal state. TREs are not exposed to application developers. They serve as internal, contract-level abstractions that enable enforceable AIaaS orchestration across administrative boundaries while remaining compatible with standardized AIaaS exposure frameworks.

\begin{definition}[Tail-Risk Envelope]
For domain $d$ and tilting parameter $\theta>0$, the TRE is
\begin{equation}
\mathrm{TRE}_d(\theta)=\big(R_d,T_d,\kappa_d,\eta_d\big),
\label{eq:tre_tuple}
\end{equation}
interpreted as a stochastic rate--latency service:
\begin{equation}
S_d(s,t)\;\ge\; R_d\big[(t-s)-T_d\big]^+ - I_d(s,t), \qquad \forall\, t\ge s,
\label{eq:impair_service}
\end{equation}
where $R_d$ and $T_d$ are deterministic guardrail parameters and the impairment process $I_d$ satisfies the Moment Generating Function (MGF) bound
\begin{equation}
\E\!\left[e^{\theta I_d(s,t)}\right]\le \exp\!\Big(\theta\kappa_d (t-s)+\theta\eta_d\Big),\qquad \forall\, t\ge s.
\label{eq:impair_mgf}
\end{equation}
\end{definition}


\noindent When $I_d\equiv 0$, \eqref{eq:impair_service} reduces to the classical rate--latency service curve.
The impairment term captures {residual uncertainty} beyond the deterministic guardrail---e.g., bursty cross-traffic,
scheduler jitter, radio/transport randomness, compute contention, and model-execution variability---in a form that yields
exponential tail bounds when combined with MGF-bounded arrivals.
Operationally, a domain publishes a {family} of TREs indexed by reservation level (priority/slice/CPU share),
and the orchestrator selects (domain, reservation) pairs so that the composed end-to-end service satisfies
\eqref{eq:chance_constraint1}.
Crucially, the orchestrator does not require raw internal state (queue occupancy, per-flow scheduling details);
it only requires the TRE parameters, which are contract-level quantities.

Federation requires that domains can coordinate without disclosing internals.
A TRE provides exactly what is needed for orchestration:
(i) a deterministic baseline ($R_d,T_d$) that can be enforced by admission and reservation,
(ii) a stochastic uncertainty term ($\kappa_d,\eta_d$) that quantifies tail risk,
and (iii) a parameterization that is composable across stages and amenable to optimization.
Later sections show how TRE composition yields end-to-end violation bounds and how tail-risk budgets can be decomposed across domains
for federated provisioning.

Considering a standard MGF-style arrival envelope yields optimization-ready tail bounds while supporting admission and isolation constraints:  tenant-\(u\) arrivals satisfy, for some \(\theta>0\) and parameters \((\rho_u,\sigma_u)\),
\begin{equation}
\E\!\left[e^{\theta A_u(s,t)}\right]\le \exp\!\Big(\theta\rho_u(t-s)+\theta\sigma_u\Big),\qquad \forall\, t\ge s,
\label{eq:arr_mgf}
\end{equation}
which is consistent with effective-bandwidth formulations \cite{Chang2000}. Here, \(\rho_u\) captures sustained load and \(\sigma_u\) captures burstiness. One can hence derive an exponential tail bounds via stochastic network calculus, yet structured enough for (i) admission control, (ii) per-tenant reservation with isolation guarantees, and (iii) federated provisioning where domains exchange compact contract parameters rather than raw traces. In subsequent analysis, \eqref{eq:arr_mgf} is combined with TRE relations \eqref{eq:tre_tuple}--\eqref{eq:impair_mgf} to derive end-to-end delay-violation bounds for \eqref{eq:chance_constraint1} and construct cross-domain risk-budget decompositions.

\section{Compositional Tail Guarantees}\label{sec:snc}

This section aims to transform the published TRE parameters into {enforceable} p99/p99.9 guarantees. The key principle is that a domain does not expose internal queue states or scheduler rules; instead, it exposes a TRE contract \(\mathrm{TRE}_d(\theta)=(R_d,T_d,\kappa_d,\eta_d)\). Given tenant arrivals satisfying the \((\sigma,\rho)\)-MGF constraint, stochastic network calculus yields an explicit exponential bound on delay-violation probability. This bound is (i) {compositional} across tandem stages, (ii) {optimization-ready} as a chance constraint, and (iii) {federation-friendly} because it depends only on contract parameters.


\subsection{Single-domain tail bound under stochastic rate--latency}\label{subsec:single}
Consider tenant \(u\) traversing domain \(d\).
Recall the impaired service model in \eqref{eq:impair_service}--\eqref{eq:impair_mgf}
and the arrival constraint in \eqref{eq:arr_mgf}.
Define the \emph{net margin}
\begin{equation}
\Delta_{u,d}\triangleq R_d-\rho_u-\kappa_d .
\label{eq:margin}
\end{equation}
The condition \(\Delta_{u,d}>0\) is the feasibility threshold: if the sustained load \(\rho_u\) plus the impairment slope \(\kappa_d\)
meets or exceeds the reserved rate \(R_d\), exponential tail decay cannot be guaranteed.
From \eqref{eq:impair_service}, we have
\[
-S_d(s,t)\le -R_d[(t-s)-T_d]^+ + I_d(s,t).
\]
Applying the exponential operator on both sides and taking expectation,
\begin{equation}
\begin{aligned}
&\E\!\left[e^{-\theta S_d(s,t)}\right]\\
&\le \exp\!\big(-\theta R_d[(t-s)-T_d]^+\big)\;\E\!\left[e^{\theta I_d(s,t)}\right] \\
&\le \exp\!\Big(-\theta R_d[(t-s)-T_d]^+ + \theta\kappa_d(t-s)+\theta\eta_d\Big),
\end{aligned}
\label{eq:neg_service_mgf}
\end{equation}
where the last inequality stems from \eqref{eq:impair_mgf} for $\forall\,t\ge s$.
We assume that each domain provides FIFO-equivalent service at the contract level, which is required for backlog–delay equivalence and is standard in stochastic network calculus. Time is treated in discrete or discretized form for bounding purposes, consistent with the union-bound arguments employed below.
\begin{theorem}[Single-domain delay-violation bound]\label{thm:single}
Under \eqref{eq:arr_mgf} and \eqref{eq:neg_service_mgf}, if \(\Delta_{u,d}>0\), then for any deadline \(\tau\ge T_d\),

\begin{equation}
\begin{aligned}
&\Prob\{W_{u}^{(d)}>\tau\}\\
&\le\;
\underbrace{\frac{\exp\!\big(\theta(\sigma_u+\eta_d+\kappa_d T_d)\big)}
{1-\exp\!\big(-\theta\Delta_{u,d}\big)}}_{\text{Prefactor: burstiness \& uncertainty}}
\;\times\;
\underbrace{\exp\!\big(-\theta\Delta_{u,d}(\tau-T_d)\big)}_{\text{Tail-decay term}} .
\end{aligned}
\label{eq:single_delay_bound}
\end{equation}

\end{theorem}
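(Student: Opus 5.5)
The plan is the standard stochastic network calculus route: write the delay event as a supremum event, apply a union bound over the discretized starting times, and apply a Chernoff bound to each term. Fix the observation time $t$. By the FIFO-equivalence assumption, the virtual delay exceeds $\tau$ only if some $s\le t$ has arrivals on $[s,t)$ exceeding the service on $[s,t+\tau)$. Equivalently, there exists $s\le t$ with
\[
A_u(s,t)-S_d(s,t+\tau)>0 .
\]
This is the step I expect to need the most care. The impairment enters through the service process over $[s,t+\tau)$, so the exact min-plus delay bound holds only up to an inequality in the direction we want, and the index set of $s$ must be countable. I would use the discrete-time assumption stated just before the theorem to make the supremum a countable union. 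I would also assume independence between $A_u$ and $S_d$ (or use Hölder's inequality, at the cost of changing $\theta$) so that the MGFs factor.

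Next, for each $s$, set $n=t-s\ge 0$ and apply Markov's inequality to $e^{\theta(\cdot)}$:
\[
\Prob\{A_u(s,t)>S_d(s,t+\tau)\}\le \E[e^{\theta A_u(s,t)}]\,\E[e^{-\theta S_d(s,t+\tau)}].
\]
Insert \eqref{eq:arr_mgf} for the first factor. For the second, use \eqref{eq:neg_service_mgf} with interval length $n+\tau$. Since $\tau\ge T_d$, the positive part is active and equals $n+\tau-T_d$. The exponent is therefore
\[
\theta\big[\rho_u n+\sigma_u - R_d(n+\tau-T_d)+\kappa_d(n+\tau)+\eta_d\big].
\]
Writing $\kappa_d(n+\tau)=\kappa_d(n+\tau-T_d)+\kappa_d T_d$, this becomes
\[
\theta(\sigma_u+\eta_d+\kappa_d T_d)\;-\;\theta\Delta_{u,d}(\tau-T_d)\;-\;\theta(R_d-\rho_u-\kappa_d)\,n .
\]
This regrouping is exactly where the prefactor term $\kappa_d T_d$ and the margin $\Delta_{u,d}$ from \eqref{eq:margin} appear.

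Finally, sum over $n=0,1,2,\dots$ using the union bound. Because $\Delta_{u,d}>0$, the geometric series $\sum_{n\ge0}e^{-\theta\Delta_{u,d}n}=\big(1-e^{-\theta\Delta_{u,d}}\big)^{-1}$ converges. This gives \eqref{eq:single_delay_bound}, uniformly in $t$, so it also holds for the stationary delay $W_u^{(d)}$.
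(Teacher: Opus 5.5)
Your proof is correct in substance, with one algebra step to fix. It follows the same skeleton as the paper: a Chernoff bound with factored MGFs, a union bound over interval lengths $n$, and a geometric series in $e^{-\theta\Delta_{u,d}}$. Where it differs is the delay-event inclusion.

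The paper bounds $\{W(t)>\tau\}$ by the backlog-type event $\sup_{s}(A(s,t)-S(s,t))>R_d(\tau-T_d)^+$. Service therefore enters only over the arrival window $[s,t)$, with $[\ell-T_d]^+$ relaxed to $\ell-T_d$. The backlog-to-delay conversion tacitly treats the service on $[t,t+\tau)$ as unimpaired guardrail service. Carried out literally, that computation sums to $e^{\theta(\sigma_u+\eta_d+R_dT_d)}e^{-\theta R_d(\tau-T_d)}/(1-e^{-\theta\Delta_{u,d}})$, which is not quite the displayed expression.

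Your dynamic-server inclusion $A_u(s,t)>S_d(s,t+\tau)$ is the standard SNC delay characterization. It charges the impairment over the full window $[s,t+\tau)$, and that is exactly what produces the $\kappa_dT_d$ term in the prefactor, so it is the cleaner route to the stated bound. Making independence of $A_u$ and $S_d$ explicit is also a real improvement, since the paper's factorization of the MGFs uses it silently.

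The step to fix is your regrouping, which is not an identity. The arrivals are integrated only over $[s,t)$, of length $n$, so the exponent is
\[
\theta(\sigma_u+\eta_d+\kappa_dT_d)-\theta(R_d-\kappa_d)(\tau-T_d)-\theta\Delta_{u,d}\,n .
\]
The coefficient of $(\tau-T_d)$ is therefore $R_d-\kappa_d=\Delta_{u,d}+\rho_u$, not $\Delta_{u,d}$. Since $\rho_u\ge 0$ (nonnegative arrivals force this in the arrival MGF envelope) and $\tau\ge T_d$, this exponent is at most the one you wrote. The theorem therefore follows with $\le$ in place of $=$, and your route actually gives the slightly sharper decay rate $R_d-\kappa_d$.
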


\begin{proof}
     Let $B(t)$ denote backlog at time $t$ in a First-IN First-Out (FIFO) system, and $W(t)$ the virtual delay. Standard Stochastic Network Calculus (SNC) arguments relate delay to backlog: $W(t)>\tau$ implies that backlog exceeds the service that can be provided over $\tau$ slots. Formally, in a stable FIFO system, we have
\begin{equation}
\{W(t)>\tau\}\subseteq \Big\{\sup_{0\le s\le t}\big(A(s,t)-S(s,t)\big) > R(\tau-T)^+\Big\}.
\end{equation}
Applying Chernoff’s bound gives:
\begin{align}
\Prob\{A(s,t)-S(s,t)>x\}
&\le e^{-\theta x}\E[e^{\theta A(s,t)}]\E[e^{-\theta S(s,t)}].
\end{align}
Using \eqref{eq:arr_mgf} and \eqref{eq:neg_service_mgf}, for $\tau\ge T_d$ and $t-s=\ell$, we have
\begin{equation}
\begin{cases}
\E[e^{\theta A(s,t)}]\le e^{\theta\rho_u\ell+\theta\sigma_u}\\
\E[e^{-\theta S_d(s,t)}]\le e^{-\theta R_d(\ell-T_d)+\theta\kappa_d\ell+\theta\eta_d}
\end{cases},
\end{equation}
which gives
\begin{equation}
\begin{aligned}
&\Prob\{A(s,t)-S(s,t)>x\}\\
&\le \exp\!\Big(-\theta x+\theta\sigma_u+\theta\eta_d+\theta(\rho_u+\kappa_d-R_d)\ell+\theta R_d T_d\Big).
\end{aligned}
\end{equation}
A union bound is then appied to all feasible durations $\ell=t-s\in\{0,1,2,\dots\}$ and set $x=R_d(\tau-T_d)$ to obtain a geometric series with ratio $e^{-\theta\Delta_{u,d}}$, where $\Delta_{u,d}=R_d-\rho_u-\kappa_d$. Summing the series yields \eqref{eq:single_delay_bound}. 
\end{proof}


The bound in \eqref{eq:single_delay_bound} naturally separates into two factors:
a \emph{prefactor} and an \emph{exponential tail-decay term}.
The prefactor captures the ``baseline difficulty'' of meeting the SLO: it increases with the tenant burstiness
\(\sigma_u\) and with the domain’s uncertainty offset \(\eta_d\), both of which make rare p99/p99.9 excursions more likely.
The exponential term captures how fast the violation probability decays when the system has slack:
its exponent \(\theta \Delta_{u,d}(\tau-T_d)\) grows with the available service margin \(\Delta_{u,d}\) and with the deadline slack
\((\tau-T_d)\). Hence, reserving more service rate \(R_d\) increases \(\Delta_{u,d}\) and yields an exponential reduction of tail violations,
making the bound directly actionable for admission and reservation.


\subsection{End-to-end composition across an AIaaS pipeline: from per-domain TREs to a path-level bound}\label{subsec:tandem}

An AIaaS request traverses a sequence of stages (radio/transport/compute), so the end-to-end delay is determined by the
{slowest cumulative progress} along the chain. This reflects the AIaaS model, where both communication and AI execution stages jointly shape user-perceived performance. Intuitively, stage $\ell+1$ cannot start serving a job before stage $\ell$
has produced it; therefore, end-to-end service is governed by the {most constraining handoff} between stages.
Network calculus formalizes this tandem coupling via the \emph{min-plus} composition rule \cite{LeBoudecThiran2001NetCalc}.
For deterministic rate--latency stages, this rule yields two simple engineering consequences:
(i) latencies accumulate across stages, and (ii) the effective service rate is limited by the bottleneck stage.

In the considered setting, each domain $d$ exposes a TRE at tilting parameter $\theta$, $\mathrm{TRE}_d(\theta)=(R_d,T_d,\kappa_d,\eta_d)$, which lower-bounds its offered service by a rate--latency guardrail
minus an impairment term. When multiple domains are concatenated, a conservative (contract-safe) composition treats the impairments as additive across stages; this preserves enforceability without any disclosure of internal queue states or scheduler rules.

\noindent Let the selected path be $\pi=(d_1,\dots,d_L)$. We aggregate the published contract parameters into path-level descriptors:
\begin{equation}
\begin{aligned}
&R_{\min}\triangleq \min_{\ell\in\{1,\dots,L\}} R_{d_\ell},\qquad
T_{\Sigma}\triangleq \sum_{\ell=1}^{L} T_{d_\ell},\\
&
\kappa_{\Sigma}\triangleq \sum_{\ell=1}^{L}\kappa_{d_\ell},\qquad
\eta_{\Sigma}\triangleq \sum_{\ell=1}^{L}\left(\eta_{d_\ell}+\kappa_{d_\ell} T_{d_\ell}\right).
\label{eq:agg_params}
\end{aligned}
\end{equation}
The interpretation is direct: $T_\Sigma$ is the accumulated ``pipeline latency floor'';
$R_{\min}$ is the bottleneck reserved rate; and $(\kappa_\Sigma,\eta_\Sigma)$ quantify the accumulated tail uncertainty
induced by network/compute impairments across domains. 

Combining these aggregates with the tenant arrival descriptor $(\rho_u,\sigma_u)$ yields the end-to-end feasibility margin
\begin{equation}
\Delta_{u} \triangleq R_{\min}-\rho_u-\kappa_{\Sigma}.
\label{eq:agg_margin}
\end{equation}
A strictly positive margin $\Delta_u>0$ is necessary: if the sustained load $\rho_u$ plus the aggregate impairment slope
$\kappa_\Sigma$ meets or exceeds the bottleneck reserved rate $R_{\min}$, then no exponential tail guarantee can be maintained. With $\Delta_u>0$, substituting the path-level descriptors into the single-domain Chernoff/SNC steps
yields an explicit bound on end-to-end delay violation. For any deadline $\tau\ge T_{\Sigma}$,
\begin{equation}
\begin{aligned}
&\Prob\{W_u>\tau\}\le
\underbrace{\frac{\exp\!\big(\theta(\sigma_u+\eta_{\Sigma})\big)}
     {1-\exp\!\big(-\theta\Delta_{u}\big)}}_{\text{burst \& uncertainty inflation}}
\;\underbrace{\exp\!\big(-\theta\Delta_{u}(\tau-T_{\Sigma})\big)}_{\text{end-to-end tail decay}}.
\end{aligned}
\label{eq:tandem_bound}
\end{equation}

\noindent The bound in~\eqref{eq:tandem_bound} shows that a p99/p99.9-type end-to-end guarantee can be verified using only
(i) the tenant traffic descriptor $(\rho_u,\sigma_u)$ and (ii) the per-domain published TRE parameters along $\pi$. This property is critical for cross-operator federation, where internal information cannot be shared. No internal topology, queue lengths, or scheduler rules are required, which is the abstraction needed for
cross-operator federation.

Given a tail SLO \((\tau_u,\varepsilon_u)\) (e.g., \(\varepsilon_u=10^{-3}\) for p99.9),
a sufficient feasibility condition from \eqref{eq:tandem_bound} is
\begin{equation}
\begin{aligned}
&\theta\Delta_u(\tau_u-T_{\Sigma})\ge \log\!\frac{1}{\varepsilon_u}
+\theta(\sigma_u+\eta_{\Sigma})
+\log\!\frac{1}{1-e^{-\theta\Delta_u}}.
\end{aligned}
\label{eq:chance_constraint_ready}
\end{equation}
The right-hand side is a \emph{risk requirement}: it aggregates the demanded reliability level
$\log\!\big(1/\epsilon_u\big)$, the burst/uncertainty inflation term
$\theta\big(\sigma_u+\eta_\Sigma\big)$, and the finite-margin correction
$\log\!\Big(1/\big(1-e^{-\theta\Delta_u}\big)\Big)$.
The left-hand side is a \emph{risk capacity}: it scales with the effective slack $\Delta_u$
and the deadline slack $(\tau_u-T_\Sigma)$.
This form is directly actionable for orchestration: increasing the bottleneck reservation
$R_{\min}$ increases $\Delta_u$; reducing accumulated latency $T_\Sigma$ increases
$(\tau_u-T_\Sigma)$; and telemetry-driven auditing primarily updates $\eta_\Sigma$, tightening feasibility until additional reservation or rerouting is applied.






\section{Federated Provisioning and Assurance}\label{sec:fed_assurance}
Section~\ref{sec:snc} yields explicit p99/p99.9 tail-violation bounds from published TREs and tenant traffic descriptors. 
This section builds the corresponding {operational} management plane for multi-domain and multi-operator AIaaS: the orchestrator selects a
federated execution path, allocates a tail-risk budget across domains, reserves joint network--compute resources with strict multi-tenant
isolation, and coordinates operators without disclosing internal queue states or scheduler rules. Assurance is then closed by an auditing and
settlement loop that (i) estimates extreme percentiles from telemetry using Extreme Value Theory (EVT), (ii) conservatively updates TRE
uncertainty terms when the tail regime shifts, and (iii) attributes penalties and revenue to domains according to their marginal contribution
to end-to-end tail risk. The central principle is end-to-end enforceability: the client observes a single SLO, while the federation operates
through signed, composable contracts rather than internal-state disclosure.
\subsection{Federated provisioning under TRE contracts}
A tenant contract specifies \((\tau_u,\varepsilon_u)\) together with policy constraints \(\Omega\) (allowed operators/locations/trust),
and the orchestrator selects a path \(\pi=(d_1,\dots,d_L)\) through radio/transport/compute stages. Since the guarantee is end-to-end, a
practical management primitive is to decompose the violation budget across the domains of the selected path $
\sum_{\ell=1}^{L}\varepsilon_{u,d{_\ell}}\le \varepsilon_u$.  Each domain then enforces a local tail constraint of the form \eqref{eq:single_delay_bound} with its assigned $varepsilon_{u,d_\ell}$.
This decomposition provides (i) per-domain admission and accountability and (ii) a cost-aware lever to shape tail risk by tightening budgets
where reliability is scarce or expensive.

Tail guarantees are fragile under bursty cross-traffic unless isolation is explicit. We therefore enforce per-tenant effective-bandwidth
partitioning within each domain. Let \(\overline{R}_{d,u}\) denote the reserved service share assigned to tenant \(u\) in domain \(d\)
(e.g., a slice class, QoS-flow reservation, or computational resource share mapped to an equivalent service rate). Isolation is enforced by
\begin{equation}
\rho_u \le \overline{R}_{d,u}-\kappa_d,
\qquad
\sum_{u}\overline{R}_{d,u}\le R_d ,
\label{eq:eb_partition}
\end{equation}
which preserves a strictly positive margin \(\overline{R}_{d,u}-\rho_u-\kappa_d\) for each tenant independent of other tenants' burst
parameters. 

Federated provisioning is posed as a joint selection and reservation problem. Each domain \(d\) publishes (i) TRE parameters
\((T_d,\kappa_d,\eta_d)\) for each reservation class, (ii) a cost curve \(c_d(\overline{R}_{d,u})\) for rate/compute reservations (or a
mapping from CPU/GPU shares to an equivalent \(R_d\)), and (iii) admissibility constraints encoded by \(\Omega\). The orchestrator selects a path \(\pi\), reservations \(\{\overline{R}_{d,u}\}\), and risk budgets \(\{\varepsilon_{u,d}\}\) to minimize total cost while satisfying
tail constraints, budgets, and isolation:
\begin{equation}\label{eq:fedopt}
\begin{aligned}
\underset{\pi,\{\overline{R}_{d,u}\},\{\varepsilon_{u,d}\}}{\text{minimize}}& \sum_{d\in\pi}\sum_{u} c_d(\overline{R}_{d,u}) \\
\text{subject to}\quad
& \Prob\{W_u^{(d)}>\tau_u\} \le \varepsilon_{u,d}, \,\, \sum_{d\in\pi}\varepsilon_{u,d} \le \varepsilon_u,
\\ &\rho_u \le \overline{R}_{d,u}-\kappa_d,\, \sum_{u}\overline{R}_{d,u} \le R_d.
\end{aligned}
\end{equation}
\noindent The decisive point is that the feasible region is {tail-risk geometry} induced by TRE contracts and arrival descriptors, not mean-rate
QoS. Consequently, feasibility and reservation sizing can be evaluated from published contracts without internal queue states or scheduler
disclosure.

Federation further imposes confidentiality: a broker cannot request internal operator state. Therefore,  coordinate provisioning can be done via Alternating Direction Method of Multipliers (ADMM)
\cite{BoydADMM2011}. Each domain solves a private subproblem using only its own cost and capacity constraints; the orchestrator updates risk
budgets and consensus variables. For online operation, the ADMM loop runs on a slower control timescale (seconds--minutes) while domains
enforce the resulting reservations on fast schedulers; coupling the slow loop with drift-plus-penalty mechanisms yields stability and
near-optimality guarantees under stochastic arrivals \cite{Neely2010}. The novelty here is that the coupling constraints are TRE-induced
p99/p99.9 chance constraints rather than average-delay objectives. The optimization in \eqref{eq:fedopt} is implemented by Algorithm~\ref{alg:admm_evt}, which performs distributed broker--domain updates with privacy-preserving local solves.

\begin{algorithm}[h!]
\caption{Federated provisioning with ADMM and EVT assurance}
\label{alg:admm_evt}
\begin{algorithmic}[1]
\STATE \textbf{Input:} feasible paths \(\Omega\), TREs, domain costs \(c_d(\cdot)\), SLOs \((\tau_u,\varepsilon_u)\)
\STATE Initialize allocations \(\{\bar R_{d,u}\}\), risk budgets \(\{\varepsilon_{u,d}\}\), and dual variables
\REPEAT
  \STATE \textbf{Domain step (parallel):} each \(d\) solves its local ADMM subproblem for \(\{\bar R_{d,u}\}\)
  \STATE \textbf{Broker step:} update \(\{\varepsilon_{u,d}\}\) and consensus variables to satisfy \eqref{eq:fedopt}
  \STATE Update dual variables and residuals
\UNTIL primal/dual residuals are below tolerance or deadline is reached
\STATE Deploy reservations and start telemetry collection
\STATE Fit EVT tails from exceedances; if SLO violation risk increases, re-enter at line 3
\end{algorithmic}
\end{algorithm}


\subsection{Audit closure and settlement}
SNC bounds provide conservative contractual guarantees; operations additionally require auditing extreme percentiles from telemetry and updating
uncertainty terms when regimes change (overload, failures, execution drift). We use EVT in a peak-over-threshold (POT) form. Such auditing is essential to maintain trust in AIaaS guarantees over time. Given observed
end-to-end sojourn times \(\{L_i\}\), choose a high threshold \(q\) (e.g., empirical p98), form exceedances \(Y_i=L_i-q\) conditioned on
\(L_i>q\), and fit a generalized Pareto distribution (GPD) with shape \(\xi\) and scale \(\beta\)
\cite{Coles2001EVT,Pickands1975POT,Embrechts1997Extremes}. The POT estimator yields extreme quantiles \(Q_p\) for \(p>F(q)\), including p99
and p99.9, together with confidence intervals.

The assurance plane uses EVT for two purposes. First, it performs compliance verification: audited \(Q_{0.999}\) is compared against the
contract-implied risk from \eqref{eq:tandem_bound} (or its domain-level decompositions) to detect tail regressions and to distinguish
transient violations from systematic drift. Second, it enables conservative contract updates: when EVT indicates heavier tails or increased
variance, uncertainty parameters (primarily \(\eta_d\), and when necessary \(\kappa_d\)) are increased so that \eqref{eq:tandem_bound}
remains an upper bound at the target confidence level. Domains then re-issue updated, signed TREs for the relevant reservation classes,
making the assurance layer self-correcting under nonstationarity while preserving the same contract interface. This enables CSPs to continuously refine AIaaS assurance without violating contractual abstractions.

Finally, federation requires settlement aligned with tail guarantees. Byte-based or CPU-second-based revenue allocation is misaligned because
tail risk may be dominated by a single domain’s uncertainty. Define the bound-implied risk score at the contract deadline:
\begin{equation}
\mathcal{K}_u(\tau_u)\triangleq -\log \widehat{\Prob}\{W_u>\tau_u\},
\end{equation}
where \(\widehat{\Prob}\{\cdot\}\) is the TRE-composed value in \eqref{eq:tandem_bound}. A domain’s marginal tail-risk contribution is
computed by the sensitivity of \(\mathcal{K}_u\) to its contract parameters \((R_d,T_d,\kappa_d,\eta_d)\) at the deployed reservation.
This yields interpretable attribution without internal disclosure. Let \(P_u\) denote the tenant payment and \(\Pi_u\) the penalty upon an
EVT-audited breach. We allocate revenue shares proportionally to positive marginal contributions (domains that increase \(\mathcal{K}_u\)) and allocate penalties proportionally to negative marginal contributions (domains whose degradation reduces \(\mathcal{K}_u\)).%

\section{Simulation Results}\label{sec:sim}

We evaluate whether the proposed TRE-based management plane turns federated AIaaS into an enforceable managed service. The simulations target three questions: 
(i) can the system maintain tail-delay compliance under rising load, 
(ii) can it provide tenant isolation under bursty/adversarial traffic, and 
(iii) can it support interpretable accountability by attributing tail-risk increases to the domains responsible for degradation. 
\subsection{Setup}
We use packet-level Monte-Carlo simulation of FIFO queues. Across all experiments we report p99.9 delay ($p=0.999$) against a deadline $\tau=30$ (time units), and we compare best-effort operation (no contract-aware control) against TRE-managed operation (contract-aware admission/reservation). For reproducibility, we log the selected acceptance factor $\alpha$ at each offered load together with the resulting empirical $Q_{0.999}(D)$ values.

The end-to-end path is a tandem of $D=3$ single-server domains. Arrivals to the first domain are Poisson with rate $\lambda$. Domain $d$ provides exponential service with rate $\mu_d$ and includes a fixed processing/propagation shift $T_d$. End-to-end delay is the sum of per-domain sojourn times (waiting + service) plus the per-domain shifts. We use
\[
\boldsymbol{\mu}=[1.0,\,1.15,\,1.25], \qquad \mathbf{T}=[0.6,\,0.5,\,0.4],
\]
and we normalize offered load as $\rho=\lambda/\min_d \mu_d$.

For each Monte-Carlo trial we simulate $N_{\mathrm{pkt}}=6000$ packets and compute the empirical tail quantile $Q_{0.999}(D)$ of end-to-end delay using the sample quantile. Each operating point is averaged over $N_{\mathrm{MC}}$ independent trials (user-adjustable in the simulator). Random seeds are fixed to ensure repeatability.

We consider three scenarios. First, we sweep $\rho$ over a grid of 10 points in $[0.55,\,0.98]$. The best-effort baseline admits all traffic. TRE-managed operation applies admission control through an acceptance factor $\alpha\in(0,1]$ (admitted rate $\lambda_{\mathrm{adm}}=\alpha\lambda$). For each offered load, $\alpha$ is chosen by a bisection search (18 iterations) to maximize admission while satisfying the conservative tail target
\[
Q_{0.999}(D)\le 0.985\,\tau,
\]
evaluated on a single packet-level run; the resulting $\alpha$ is then re-evaluated with Monte-Carlo averaging.

Second, we evaluate isolation in a single shared bottleneck domain with service rate $\mu=\min_d \mu_d$. Two tenants share the domain: a victim stream with Poisson arrivals at rate $\lambda_v=0.55\,\mu$, and an attacker stream with fixed mean rate $\lambda_a=0.12\,\mu$ but increasing burstiness. Burstiness is controlled by a peak-to-mean ratio $b\in[1,8]$ using a correlated ON/OFF arrival process: during ON periods the attacker rate is $\lambda_{\mathrm{on}}=b\lambda_a$; during OFF periods it is $\lambda_{\mathrm{off}}=\lambda_a/b$; ON/OFF periods alternate with mean run lengths that increase with $b$ (longer bursts for larger $b$). We compare shared FIFO multiplexing (single queue, no isolation) against per-tenant reservation (separate queues with fixed service shares). In reservation mode, the victim receives a dedicated share $s_v=0.85$ (service rate $\mu_v=s_v\mu$) and the attacker receives $\mu_a=(1-s_v)\mu$.

Third, to simulate domain degradation, we scale service rates by a factor $s\in[0.60,1.0]$ (6 points), where $s=1$ is nominal and smaller values represent degraded service. We fix the offered load near the operating knee at $\lambda=0.85\,\min_d\mu_d$. We report the increase in mean p99.9 delay
\[
\Delta Q_{0.999}(D)=\mathbb{E}[Q_{0.999}(D)\!\mid \!s]-\mathbb{E}[Q_{0.999}(D)\!\mid \!s\!=\!1].
\]
To obtain marginal attribution, we compute (i) the total increase when all domains are degraded by $s$, and (ii) the increase when only domain $d$ is degraded by $s$; then the marginal contributions per-domain are normalized to sum up the total increase in each $s$. Table~\ref{tab:simparams} summarizes the parameters used in the plots.

\begin{table}[t]
\centering
\caption{Simulation parameters used in the reported experiments.}
\label{tab:simparams}
\begin{tabular}{@{}ll@{}}
\toprule
Tandem domains & $D=3$ \\
Service rates & $\boldsymbol{\mu}=[1.0,\,1.15,\,1.25]$ \\
Fixed shifts & $\mathbf{T}=[0.6,\,0.5,\,0.4]$ \\
Tail percentile & $p=0.999$ (p99.9) \\
Deadline & $\tau=30$ \\
Packets per trial & $N_{\mathrm{pkt}}=6000$ \\
Monte-Carlo trials & $N_{\mathrm{MC}}$ (10000) \\
Load grid & $\rho\in[0.55,0.98]$ (10 points) \\
TRE guard factor & $0.985$ \\
Isolation: victim load & $\lambda_v=0.55\,\mu$ \\
Isolation: attacker mean & $\lambda_a=0.12\,\mu$ \\
Isolation: victim share & $s_v=0.85$ \\
Burstiness grid & $b\in[1,8]$ (8 points) \\
Degradation grid & $s\in[0.60,1.0]$ (6 points) \\
\bottomrule
\end{tabular}
\vspace{-.4cm}
\end{table}

\subsection{Results}

Fig.~\ref{fig:load} reports the estimated p99.9 end-to-end delay $Q_{0.999}(D)$ versus normalized offered load $\rho$. Under best-effort operation, the tail quantile remains moderate at low-to-mid loads but rises sharply as $\rho$ approaches saturation, quickly exceeding the deadline $\tau$. This is the classical tail-amplification regime: once utilization becomes high, rare backlog events persist long enough to dominate extreme percentiles, so the service can look acceptable on average while violating p99.9 systematically. In contrast, TRE-managed operation avoids this cliff by adapting the admitted rate to preserve headroom. The curve staying close to $\tau$ indicates that the management plane is not merely improving mean latency; it is explicitly preventing tail runaway and thus maintaining a predictable worst-case user experience. Operationally, this supports the paper’s premise that federated AIaaS cannot be treated as best-effort transport: tail compliance requires an explicit control knob that trades throughput for reliability when the system is stressed.


Fig.~\ref{fig:isolation} evaluates robustness to adversarial burstiness. Under shared FIFO multiplexing, increasing attacker burstiness produces pronounced inflation of the victim’s p99.9 delay even though the attacker’s mean rate is held fixed. This is a key point for multi-tenant AIaaS: it is not the average load alone that determines tail compliance, but the burst structure of competing tenants. Correlated bursts create periods of sustained overload that force the victim to wait behind attacker backlog, producing a strong cross-tenant coupling at extreme percentiles. Under per-tenant reservation, the victim tail remains essentially flat over the full burstiness range, indicating that the victim’s tail behavior is governed primarily by its own traffic and reserved capacity rather than by attacker burst patterns. This directly supports the isolation claim: enforcing per-tenant shares prevents burst-induced tail inflation and converts a fragile shared-queue regime into a predictable service envelope, which is essential for offering contractual p99/p99.9 guarantees in a multi-tenant setting.

Fig.~\ref{fig:settlement} connects performance assurance to accountability. As domains are degraded (smaller service-rate scaling factor $s$), the increase in tail delay $\Delta Q_{0.999}(D)$ grows, confirming that extreme percentiles are highly sensitive to even moderate capacity loss. The marginal decomposition further shows that not all domains contribute equally to the tail increase: the most capacity-critical domain(s) account for a disproportionate share of $\Delta Q_{0.999}(D)$, consistent with the intuition that bottlenecks dominate end-to-end tail behavior. This provides two practical implications for federated operation. First, it enables audit-ready diagnostics: when tail violations occur, the attribution points to where degradation most strongly impacts the end-to-end objective. Second, it supports incentive-compatible federation: penalties or credits can be tied to each domain’s marginal tail-risk contribution rather than to coarse end-to-end metrics that unfairly blame all parties equally. In this sense, the figure demonstrates that the proposed assurance layer is not only predictive (bounding risk) but also operational (supporting accountability under multi-operator composition).


\begin{figure}[t]
  \centering
  \includegraphics[width=\linewidth]{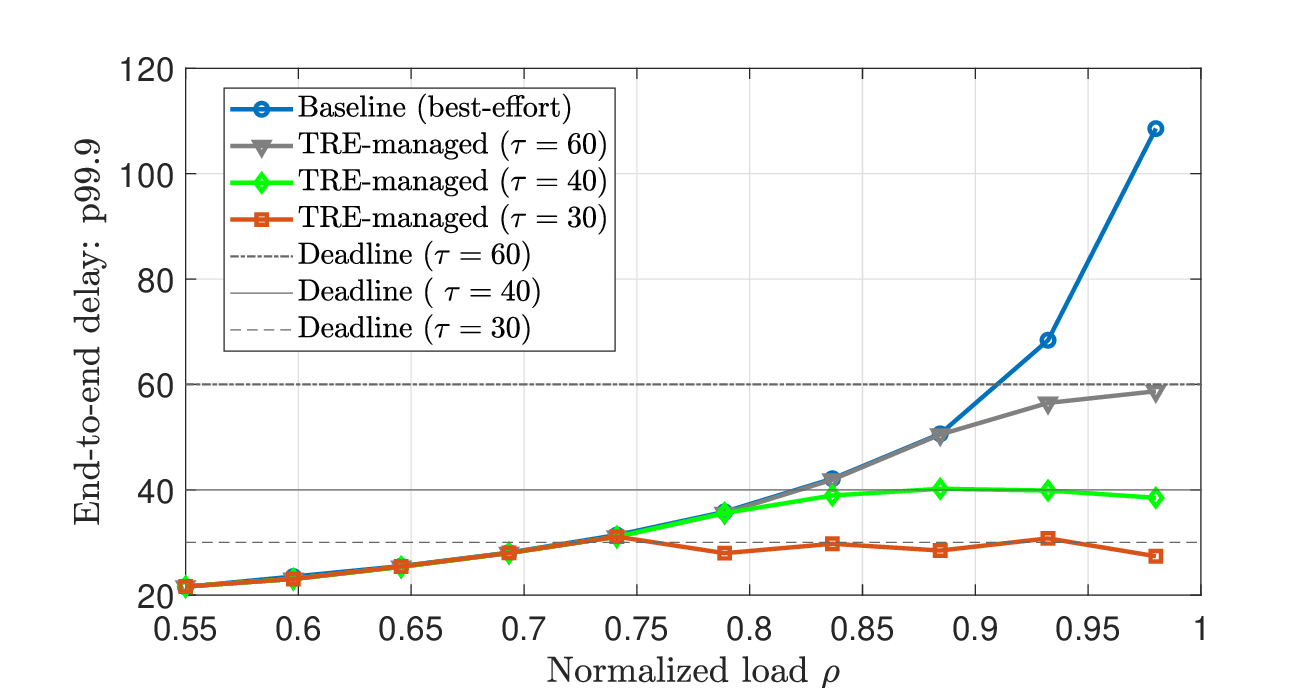}
    \vspace{-0.6cm}
  \caption{Estimated p99.9 end-to-end delay versus normalized offered load $\rho$.}
  \label{fig:load}
     \vspace{-0.4cm}
\end{figure}


\begin{figure}[t]
  \centering
  \includegraphics[width=\linewidth]{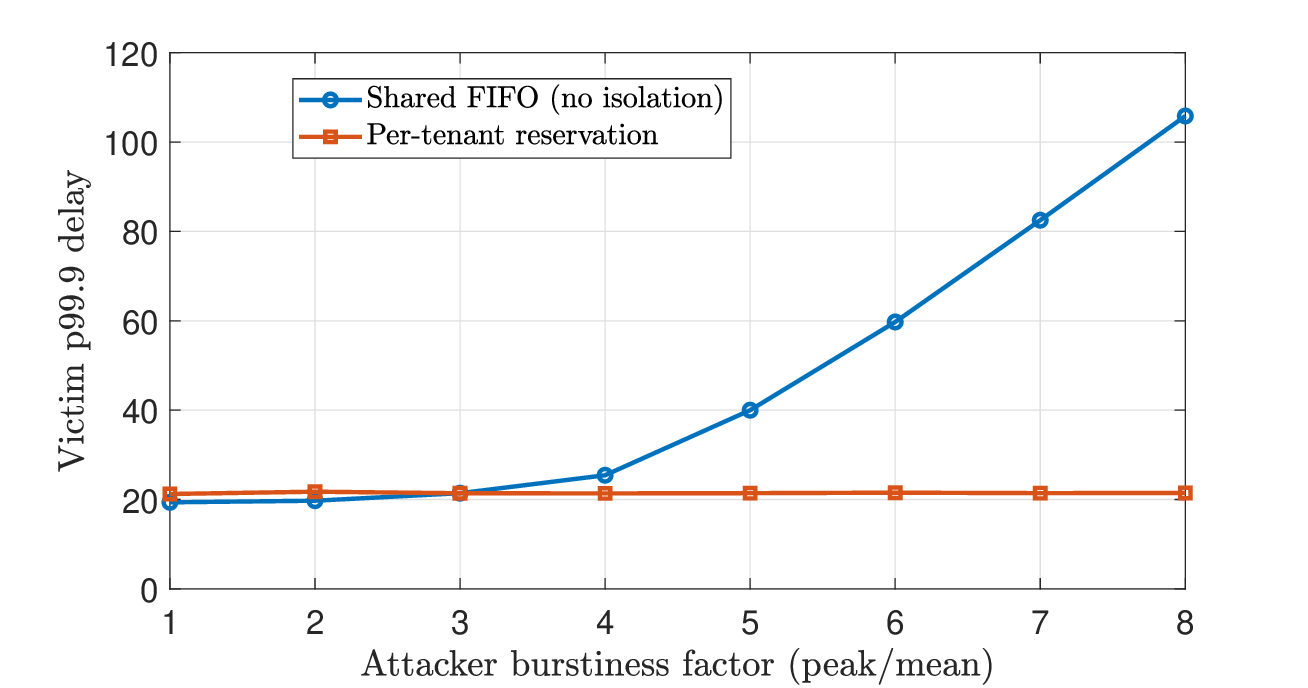}
  \vspace{-0.6cm}
  \caption{Isolation under burstiness.}
  \label{fig:isolation}
  \vspace{-0.4cm}
\end{figure}

\begin{figure}[t]
  \centering
  \includegraphics[width=\linewidth]{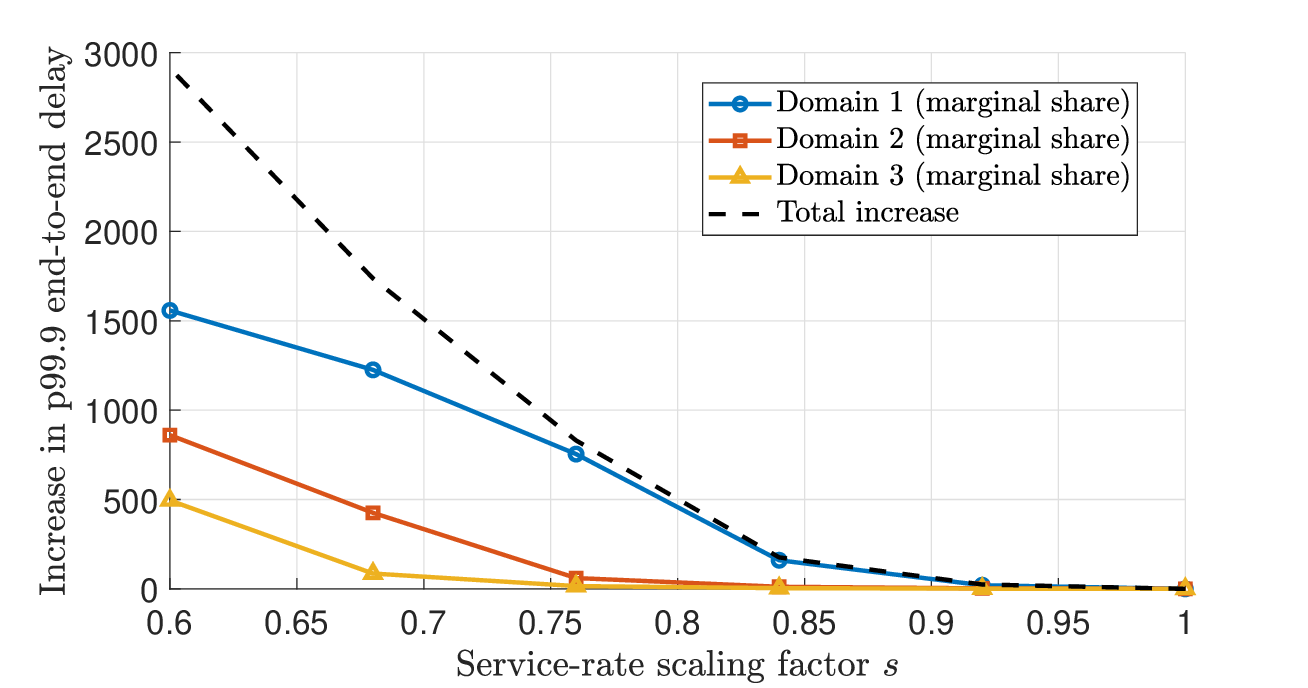}
    \vspace{-0.6cm}
  \caption{Marginal tail-risk attribution under controlled degradation.}
  \label{fig:settlement}
       \vspace{-0.5cm}
\end{figure}


\section{Conclusion}
This paper proposes an assurance-oriented management and orchestration framework that makes \emph{federated} AIaaS enforceable across multiple administrative domains. We introduced TREs as signed, composable per-domain contracts that support end-to-end p99/p99.9 reasoning while requiring only disclosure-minimal parameters. Building on these contracts, we develop a federated orchestration method that decomposes an end-to-end tail-risk budget into per-domain obligations, enabling confidentiality-preserving coordination and strict multi-tenant isolation. A telemetry-driven auditing layer continuously recalibrates tail models and supports accountability by attributing end-to-end tail-risk changes to individual domains. Simulations confirm improved p99.9 reliability under overload, robust isolation under bursty interference, and stable attribution under controlled multi-domain degradation.

\bibliographystyle{IEEEtran}
\bibliography{IEEEabrv,refs}

\end{document}